\documentclass[preprint]{elsarticle}

\usepackage{amsmath}
\usepackage{amsthm}
\usepackage{amssymb}
\usepackage{color}
\usepackage{enumerate}

\newcommand{\alg}[1]{\texttt{#1}}

\newtheorem{lemma}{Lemma}
\newtheorem{corollary}{Corollary}
\newtheorem{theorem}{Theorem}
\newtheorem{definition}{Definition}
\newtheorem{example}{Example}


\begin{document}

\begin{frontmatter}

\title{When Six Gates are Not Enough\tnoteref{thanks}}
\tnotetext[thanks]{Supported by the Israel Science Foundation, grant 182/13, and by the Danish Council for Independent Research, Natural Sciences.}

\author[israel]{Michael Codish}
\author[denmark]{Lu\'{i}s Cruz-Filipe}
\author[israel]{Michael Frank}
\author[denmark]{Peter Schneider-Kamp}

\address[israel]{Department of Computer Science, Ben-Gurion University of the Negev, Israel}
\address[denmark]{Department of Mathematics and Computer Science, University of Southern Denmark}

\begin{abstract}
  We apply the pigeonhole principle to show that there must exist Boolean
  functions on $7$ inputs with a multiplicative complexity of at least $7$,
  i.e., that cannot be computed with only $6$ multiplications in the Galois field with two elements.
\end{abstract}

\begin{keyword}
multiplicative complexity, Boolean functions, circuit topology
\end{keyword}

\end{frontmatter}

\section{Introduction}
The multiplicative complexity of a Boolean function is the minimal number of
multiplications over the Galois field $GF(2)$ needed to implement it.
As a measure of a function's non-linearity,
it is an important property with many applications, e.g., in the analysis of
cryptographic ciphers and hash functions \cite{Boyar2008}, or in the study of the communication
complexity of multiparty computation~\cite{Hirt2005}.

On a circuit level, multiplications over $GF(2)$ correspond to AND gates, while additions
correspond to XOR gates and the unit to the constant $\top$ (TRUE).
Thus, an equivalent characterization of the multiplicative complexity
of a Boolean function is to consider the minimal number of AND gates needed to implement the
function in the presence of an arbitrary number of XOR gates. It is this second characterization
which will be used throughout this paper.

Given a number of inputs $n$, the maximal multiplicative complexity of an $n$-ary Boolean function
is denoted by $M(n)$.
In other words, $M(n$) measures how much intrinsic non-linearity is possible given a fixed
number of arguments.
Determining lower bounds for $M(n)$ is an interesting question that has been widely addressed
e.g.~in~\cite{Boyar2008,Boyar2000,Turan2014}.
In this article, we apply a pigeonhole argument to prove that $M(7)\geq 7$, raising the previous best known lower bound by $1$.

The structure of this paper is as follows.
We present the necessary background in Section \ref{sec:background}, and define an abstract notion of topology of a circuit in Section~\ref{sec:def-topology}.
In Section \ref{sec:symmetry}, we introduce a symmetry break to reduce the upper bound on the number of Boolean functions of $n$ inputs computable by circuits with $k$ AND gates.
In Section \ref{sec:topology}, we study the different ways in which we can interconnect those AND gates, showing that we can drastically reduce the number of relevant circuits by a generate-and-prune algorithm inspired by~\cite{ourICTAIpaper}.
Combining these two results, we apply a pigeonhole counting argument in Section \ref{sec:result} to obtain our new lower bound.
We conclude with an outlook on future work in Section~\ref{sec:conclusion}.

\section{Background}
\label{sec:background}

A Boolean function on $n$ inputs, or an $n$-ary Boolean function, is a function from $\{0,1\}^n\to\{0,1\}$.
The set of all Boolean functions on $n$ inputs is denoted $B_n$, and $|B_n|=2^{2^n}$.
We will often write $\bot$ for $0$ and $\top$ for $1$.

It is well known that every Boolean function can be implemented by means of a circuit consisting of only AND ($\wedge$), XOR ($\oplus$) and NOT ($\neg$) gates.
Furthermore, since $\neg x=x\oplus\top$, the NOT gates can be removed if we allow the use of the constant $\top$.
As observed in~\cite{Boyar2000}, we can assume AND gates to be binary and XOR gates to have an unbounded number of inputs.
Such circuits are called XOR-AND circuits therein; in this paper, we will refer to them simply as \emph{circuits}.
Due to the associativity of XOR, any circuit with $k$ AND gates can therefore be specified using exactly $2k+1$ XOR gates: $2k$ of them producing the inputs for the AND gates, and an additional one to produce the output.

\begin{definition}
  \label{defn:circuit}
  For each natural number $n$, let $X_n=\{x_i\mid 1\leq i\leq n\}$ denote the $n$ inputs to a circuit, and $X_n^+=X_n\cup\{\top\}$.
  A circuit with $n$ inputs and $k$ AND gates is a pair $\mathcal C=\langle\mathcal A,\mathcal O\rangle$, where:
  \begin{itemize}
  \item $\mathcal A=\langle a_i\mid 1\leq i\leq k\rangle$ is a list of $k$ AND gates, where the $i$-th gate $a_i=\langle L_i,R_i\rangle$ with $L_i,R_i\subseteq\{a_j\mid 1\leq j<i\}\cup X^+$.
  \item $\mathcal O\subseteq\mathcal A\cup X_n^+$ is the output (XOR) gate.
  \end{itemize}
\end{definition}
Intuitively, each element of $\mathcal A$ represents an AND gate, whose inputs are the outputs of two XOR gates whose inputs are given by $L_i$ and $R_i$, which we will informally write as $\left(\bigoplus L_i\right)\wedge\left(\bigoplus R_i\right)$.
$\mathcal O$ represents the final XOR gate, and the function $f_{\mathcal C}$ computed by $\mathcal C$ returns the output from this gate.

\begin{example}
  \label{ex:circuit}
  Consider the circuit depicted in Figure~\ref{fig:circuit}, which computes the majority function on $4$ bits (returning $\top$ if at least three of the bits are $\top$).
  In our notation, this circuit is represented as $\mathcal C=\langle\mathcal A,\mathcal O\rangle$, where:
  \begin{align*}
    \mathcal A &= \langle a_1,a_2,a_3,a_4\rangle &
    a_1 &= \langle\{x_1\},\{x_2\}\rangle &
    a_3 &= \langle\{x_1,x_2\},\{a_2\}\rangle \\
    \mathcal O &= \{a_3,a_4\} &
    a_2 &= \langle\{x_3\},\{x_4\}\rangle &
    a_4 &= \langle\{a_1\},\{x_3,x_4,a_2\}\rangle
  \end{align*}
\end{example}

\begin{figure}
  \centering
  \scalebox{.71}{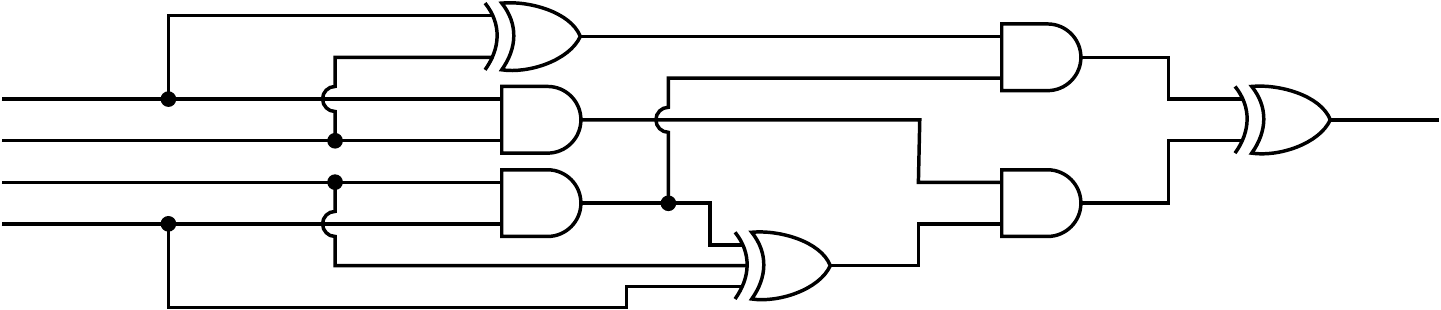}

  \caption{A circuit computing the majority function on $4$ bits.
    The labels on the AND gates are as in Example~\ref{ex:circuit}.
    Here, $f_{\mathcal C}(\vec x)=((x_1\oplus x_2)\wedge(x_3\wedge x_4))\oplus((x_1\wedge x_2)\wedge(x_3\oplus x_4\oplus(x_3\wedge x_4))$.}
  \label{fig:circuit}
\end{figure}

\begin{lemma}[Lemma~15 from~\cite{Boyar2000}]
  \label{lemma15}
  At most $2^{k^2+2k+2kn+n+1}$ functions from $B_n$ can be computed by circuits with $k$ AND gates.
\end{lemma}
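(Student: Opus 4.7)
The plan is to count directly the number of distinct circuit specifications that fit Definition~\ref{defn:circuit}, and to observe that this is an upper bound on the number of Boolean functions realizable, since each such function arises from at least one circuit. There is no real obstacle here; the whole argument is bookkeeping, and the only thing to be careful about is indexing the domains from which $L_i$ and $R_i$ are drawn.

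First I would count, for each $i$ with $1\leq i\leq k$, the number of choices for the pair $a_i=\langle L_i,R_i\rangle$. By the definition, $L_i$ and $R_i$ are arbitrary subsets of $\{a_j\mid 1\leq j<i\}\cup X_n^+$, a set of cardinality $(i-1)+(n+1)=n+i$. Hence there are $2^{n+i}$ choices for $L_i$ and independently $2^{n+i}$ choices for $R_i$, giving $2^{2(n+i)}$ choices for $a_i$. Multiplying over $i=1,\ldots,k$, the number of choices for $\mathcal A$ is
\[
\prod_{i=1}^{k}2^{2(n+i)}=2^{\sum_{i=1}^{k}2(n+i)}=2^{2kn+k(k+1)}=2^{k^2+k+2kn}.
\]

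Next I would count the choices for the output gate $\mathcal O$, which is an arbitrary subset of $\mathcal A\cup X_n^+$, a set of cardinality $k+n+1$. This gives $2^{k+n+1}$ possibilities. Combining with the previous count, the total number of circuits with $n$ inputs and $k$ AND gates is at most
\[
2^{k^2+k+2kn}\cdot 2^{k+n+1}=2^{k^2+2k+2kn+n+1}.
\]

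Finally I would conclude: every function in $B_n$ computable by a circuit with $k$ AND gates is the function $f_{\mathcal C}$ of at least one such circuit $\mathcal C$, so the map $\mathcal C\mapsto f_{\mathcal C}$ from the set of circuits onto the set of realizable functions yields the stated bound. The argument overcounts heavily (many circuits compute the same function), but this loose bound is all that is needed for the later pigeonhole application.
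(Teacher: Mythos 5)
Your proposal is correct and follows exactly the paper's own argument: count $2^{2(n+i)}$ choices per gate, $2^{n+k+1}$ for the output, multiply, and invoke surjectivity of $\mathcal C\mapsto f_{\mathcal C}$ onto the computable functions. The arithmetic and the bound match the paper's proof precisely.
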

\begin{proof}[Proof~\cite{Boyar2000}]
For the $i$-th gate, there are $2^{2(n+1+i-1)}$ possible sets $L_i$ and $R_i$: each may use the $n$ inputs, $\top$, and the $i-1$ previous AND gates.
For the output, there are $2^{n+1+k}$ possibilities. Thus, there are at most
$2^{n+1+k} \times \prod_{i=1}^k 2^{2(n+1+i-1)} = 2^{n+1+k+k(k+2n+1)} = 2^{k^2+2k+2kn+n+1}$
potentially computable functions.
\end{proof}
For $n = 7$ and $k = 6$, Lemma~\ref{lemma15} yields an upper bound of $2^{140}$ functions from $B_7$ computable with $6$ AND gates, i.e., $6$ AND gates are potentially enough to compute all $2^{2^7}$ Boolean functions with $7$ inputs.

Table~\ref{tab:M(n)} represents some known values and lower bounds for $M(n)$.
The fully-determined values of $M(n)$ for up to $4$ inputs are folklore, and easily shown to be correct, while $5$ was shown in \cite{Turan2014} using an exhaustive computer-based exploration of all $48$ equivalence classes of $B_5$.
The latter approach does not directly scale to $6$ inputs, as the number of equivalences classes of $B_6$ explodes to $150{,}357$.

The lower bound for $6$ inputs is based on the observation that trivially $M(n)\geq n-1$.
As the above table shows, this bound is tight for the determined values of $n\leq 5$.
The counting argument from~\cite{Boyar2000} gives a non-trivial lower bound for $n\geq 8$, leaving the open questions of whether the lower bounds for $6$ and $7$ inputs are tight.
We prove that this is not the case for $7$ inputs.

\begin{table}[b]
\centering
\begin{tabular}{l||c|c|c|c|c|c|c|c}
$n$ & $1$ & $2$ & $3$ & $4$ & $5$ & $6$ & $7$ & $8$\\
\hline
$M(n)$ & $0$ & $1$ & $2$ & $3$ & $4$ & $\geq 5$ & $\geq 6$ & $\geq 9$
\end{tabular}
\caption{Known determined values and lower bounds of $M(n)$ for up to $8$ inputs.}
\label{tab:M(n)}
\end{table}

\section{Topology of a circuit}
\label{sec:def-topology}

Our results capitalize on one abstraction: the notion of topology of a circuit,
which intuitively forgets all connections except those between the AND gates, distinguishing only the different ways in which they use each others' outputs.

\begin{definition}
  A \emph{(circuit) topology} is a set $\mathcal A$ of AND gates, as in Definition~\ref{defn:circuit}, except that $L\cup R\subseteq\mathcal A$ for all $\langle L,R\rangle\in\mathcal A$.
  Given an AND-XOR circuit $\mathcal C=\langle\mathcal A,\mathcal O\rangle$, the \emph{topology} of $\mathcal C$ is
  $\langle\langle L\cap\mathcal A,R\cap\mathcal A\rangle\mid\langle L,R\rangle\in\mathcal A\rangle$.
\end{definition}
Informally, a topology abstracts from the linear part of the circuit, considering only the connections between the AND gates; different circuits with the same topology can compute different Boolean functions.

\begin{example}
  The topology of the circuit $\mathcal C$ in Figure~\ref{fig:circuit} is
  $\{a_1,a_2,a_3,a_4\}$, with
  $a_1=a_2=\langle\emptyset,\emptyset\rangle$,
  $a_3=\langle\emptyset,\{a_2\}\rangle$ and
  $a_4=\langle\{a_1\},\{a_2\}\rangle$.
\end{example}

\begin{definition}
  Let $T$ be a topology.
  A function $f\in B_n$ is \emph{computable by $T$} if $f$ is computed by some circuit $\mathcal C$ whose topology is $T$.
\end{definition}

The notion of topology allows us to give a different proof of Lemma~\ref{lemma15}.
Since each AND gate consists of two subsets of the previous gates, the total number of different topologies on $k$ gates is
\begin{equation}
  \label{eq:topcount}
  \prod_{i=1}^k\left(2^{i-1}\right)^2 = 2^{\sum_{i=1}^k 2(i-1)} = 2^{k^2-k}\,.
\end{equation}
On the other hand, each input to each gate in a topology abstracts from $2^{n+1}$ concrete circuits (those containing the AND gates specified in the topology, plus any combination of circuit inputs and possibly $\top$), so there are
\begin{equation}
  \label{eq:circcount}
  \left(2^{n+1}\right)^{2k}\times 2^{k+n+1}
\end{equation}
circuits with any given topology, where the second term in this product counts the number of possibilities for the output gate.
Combining both estimates, we obtain a total of
$2^{k^2-k}\times\left(2^{n+1}\right)^{2k}\times 2^{k+n+1} = 2^{k^2-k+2kn+2k+k+n+1} = 2^{k^2+2k+2kn+n+1}$
different circuits.
In the next sections, we will optimize the bounds in Equations~\eqref{eq:topcount} and~\eqref{eq:circcount} separately.

\section{Breaking symmetry on negations}
\label{sec:symmetry}

In this section, we note that there are different circuits with the same number of AND gates that compute the same $n$-ary Boolean functions, and that we can provide a syntactic characterization for many of these, thus improving the bound of Equation~\eqref{eq:circcount}.

\begin{definition}
  Let $\mathcal C=\langle\mathcal A,\mathcal O\rangle$ be a circuit.
  We say that $\mathcal C$ is \emph{negation-normal} if there is no gate $\langle L,R\rangle\in\mathcal A$ such that $\top\in L\cap R$.
\end{definition}

\begin{lemma}
  \label{lem:not-not}
  Every $n$-ary Boolean function computable by a circuit with $k$ AND gates can be computed by a negation-normal circuit with $k$ AND gates.
\end{lemma}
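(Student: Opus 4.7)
The plan is to show that any gate $a_i=\langle L_i,R_i\rangle$ with $\top\in L_i\cap R_i$ can be rewritten to remove $\top$ from both sides, while patching the downstream XOR gates so that the overall Boolean function is preserved and no new AND gate is introduced. Setting $l=\bigoplus(L_i\setminus\{\top\})$ and $r=\bigoplus(R_i\setminus\{\top\})$, such a gate computes $(\top\oplus l)\wedge(\top\oplus r)=\neg l\wedge\neg r$, and the elementary identity $\neg l\wedge\neg r=\top\oplus l\oplus r\oplus(l\wedge r)$ shows that its output differs from the output of the cleaned-up gate $\langle L_i\setminus\{\top\},R_i\setminus\{\top\}\rangle$ by the purely linear term $\top\oplus l\oplus r$, which involves no AND operation.

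The first step is to perform the replacement of $a_i$ by $\langle L_i\setminus\{\top\},R_i\setminus\{\top\}\rangle$ and then compensate for the resulting change of output by pushing the linear correction $\top\oplus l\oplus r$ through every XOR gate that consumed $a_i$. Concretely, for each $j>i$ and each of $L_j,R_j$ that contains $a_i$, and for $\mathcal O$ if it contains $a_i$, I would replace the set $S$ by $S\triangle\{\top\}\triangle(L_i\setminus\{\top\})\triangle(R_i\setminus\{\top\})$. Using the identity $\bigoplus(S\triangle T)=\bigoplus S\oplus\bigoplus T$ together with the fact that none of $\{\top\}$, $L_i\setminus\{\top\}$, $R_i\setminus\{\top\}$ contains any $a_{j'}$ with $j'\geq i$, a short induction on $j$ shows that every gate at position $j\geq i$ produces the same output as it did before the transformation, so the Boolean function computed at $\mathcal O$ is unchanged. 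Since each modified $L_j$ or $R_j$ remains a subset of $\{a_1,\ldots,a_{j-1}\}\cup X_n^+$, and $\mathcal O$ remains a subset of $\mathcal A\cup X_n^+$, the result is still a valid circuit in the sense of Definition~\ref{defn:circuit}, with exactly $k$ AND gates.

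To conclude, I would process the gates in order $i=1,2,\ldots,k$, applying the above transformation whenever the current $i$th gate satisfies $\top\in L_i\cap R_i$. Termination is immediate because the transformation only edits gates strictly after $i$, so once gate $i$ has been fixed it is never touched again and remains negation-normal. The one point that has to be treated with care is that toggling $\top$ in some $L_j$ or $R_j$ can turn a previously innocent gate $j$ into an offending one (exactly when $a_i\in L_j\cap R_j$), but this does not cause any regress: every such newly offending gate has index strictly greater than $i$ and will be cleaned up later in the same left-to-right sweep. After the $k$-th iteration every gate is negation-normal, the AND-count is still $k$, and the computed function is the original $f$.
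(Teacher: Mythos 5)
Your proof is correct and takes essentially the same approach as the paper, which applies the identity $(X\oplus\top)\wedge(Y\oplus\top)\equiv(X\wedge Y)\oplus X\oplus Y\oplus\top$ and simply notes that both sides use one AND gate. You have merely made explicit the details the paper leaves implicit: how the linear correction is absorbed into the downstream XOR gates via symmetric difference, and why the left-to-right sweep terminates even though fixing one gate may create new offenders at later indices.
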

\begin{proof}
  By using the equivalence
  $(X\oplus\top)\wedge(Y\oplus\top)\equiv(X\wedge Y)\oplus X\oplus Y\oplus\top$
  we can rewrite any circuit so that no AND gate has $\top$ added to both its inputs.
  Observe that both sides of the equation use only one AND gate.
\end{proof}

\begin{theorem}
  \label{teo:functions}
  The number of negation-normal circuits on $n$ inputs with a given topology on $k$ AND gates is at most
  $\left(3\times 2^{2n}\right)^k\times2^{n+k+1}$.
\end{theorem}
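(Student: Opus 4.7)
The plan is to refine the estimate in Equation~\eqref{eq:circcount} by exploiting the negation-normal constraint at each AND gate. Since the topology fixes, for every gate $\langle L_i,R_i\rangle$, the portions $L_i\cap\mathcal A$ and $R_i\cap\mathcal A$, the only freedom when extending a topology to a concrete circuit lies in choosing $L_i\cap X_n^+$ and $R_i\cap X_n^+$ for each $i$, together with the output set $\mathcal O$. I will bound each of these independently and then multiply.

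First I would count, for a single gate, the number of choices of $(L_i\cap X_n^+,\,R_i\cap X_n^+)$. Since $|X_n^+|=n+1$, without any restriction there are $(2^{n+1})^2=4\times 2^{2n}$ such pairs. The negation-normal condition rules out exactly those pairs in which $\top$ belongs to both $L_i\cap X_n^+$ and $R_i\cap X_n^+$; once $\top$ is forced into both sides, the remaining elements are chosen freely from $X_n$, giving $2^n\times 2^n=2^{2n}$ forbidden pairs. Subtracting, each gate admits $4\times 2^{2n}-2^{2n}=3\times 2^{2n}$ admissible pairs, yielding a factor of $\left(3\times 2^{2n}\right)^k$ over the $k$ gates.

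Next, the output is an arbitrary subset $\mathcal O\subseteq\mathcal A\cup X_n^+$, a set of size $k+n+1$; there are thus $2^{k+n+1}$ possibilities, and no negation-normal restriction applies at the output. Multiplying the two contributions gives the announced bound $\left(3\times 2^{2n}\right)^k\times 2^{n+k+1}$.

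The argument is essentially routine counting; the one point that needs care is the inclusion--exclusion step that produces the factor $3$ rather than $4$. It is worth noting that this improvement is genuine: Lemma~\ref{lem:not-not} tells us that restricting to negation-normal circuits loses no functions, so the bound on negation-normal circuits is also a bound on the number of functions computable by the topology. There is no obstacle in the proof itself; the only subtlety is to resist the temptation to break symmetry elsewhere (for instance, at the output or by swapping $L_i$ and $R_i$), since the statement asks only for the gain coming from the $\top\in L\cap R$ restriction.
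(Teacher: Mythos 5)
Your proof is correct and follows essentially the same decomposition as the paper: a per-gate factor for the linear inputs times a factor $2^{n+k+1}$ for the output gate. The only cosmetic difference is that you obtain the factor $3\times 2^{2n}$ by subtracting the forbidden pairs from $4\times 2^{2n}$, whereas the paper enumerates the three admissible placements of $\top$ directly; the two counts are identical.
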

\begin{proof}
  The argument is similar to the one establishing Equation~\eqref{eq:circcount}.
  Each AND gate in the topology corresponds to $3\times 2^n\times 2^n$ possibilities: each input can receive any subset of circuit inputs (the two $2^n$ factors), and either one may also receive $\top$, but not both.
  The possibilities for the output gate are unchanged.
\end{proof}

Combining this result with Equation~\eqref{eq:topcount}, we obtain the following result.
\begin{corollary}
  At most $3^k\times 2^{k^2+2kn+n+1}$ functions from $B_n$ can be computed by circuits with $k$ AND gates.
\end{corollary}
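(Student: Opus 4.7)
The plan is to combine the two ingredients developed immediately before the statement: the topology count from Equation~\eqref{eq:topcount}, and the per-topology circuit count from Theorem~\ref{teo:functions}. First I would invoke Lemma~\ref{lem:not-not} to reduce the problem: since every function computable by a circuit with $k$ AND gates is also computable by a negation-normal such circuit, it suffices to bound the number of Boolean functions realized by negation-normal circuits with $k$ AND gates.

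Next, I would stratify these circuits by their topology. Equation~\eqref{eq:topcount} gives at most $2^{k^2-k}$ topologies on $k$ AND gates, and Theorem~\ref{teo:functions} bounds the number of negation-normal circuits with any fixed topology by $\left(3\times 2^{2n}\right)^k \times 2^{n+k+1}$. Multiplying these two quantities yields an upper bound of $2^{k^2-k} \times \left(3\times 2^{2n}\right)^k \times 2^{n+k+1}$ on the number of negation-normal circuits, and hence on the number of distinct functions they compute.

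The final step is a purely arithmetic simplification: pulling out the $3^k$ factor and collecting powers of $2$ gives $k^2 - k + 2nk + n + k + 1 = k^2 + 2kn + n + 1$ in the exponent, yielding the claimed bound $3^k \times 2^{k^2+2kn+n+1}$.

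There is no real obstacle here: all the substantive work has already been done in Lemma~\ref{lem:not-not}, Equation~\eqref{eq:topcount} and Theorem~\ref{teo:functions}. The corollary is essentially bookkeeping, and the only thing to be careful about is not double-counting the $2^{-k}$ savings relative to Lemma~\ref{lemma15} — exactly the factor by which the symmetry break on negations improves the naive estimate of Equation~\eqref{eq:circcount}, as each AND gate loses one of its $2^{2(n+1)}$ input configurations (the one with $\top$ on both sides) and instead contributes $3 \cdot 2^{2n}$ rather than $4 \cdot 2^{2n}$ possibilities.
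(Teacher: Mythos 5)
Your proof is correct and takes essentially the same approach as the paper: the corollary is obtained exactly by multiplying the topology count $2^{k^2-k}$ of Equation~\eqref{eq:topcount} by the per-topology bound of Theorem~\ref{teo:functions}, with Lemma~\ref{lem:not-not} justifying the restriction to negation-normal circuits, and the exponent arithmetic checks out. One small quibble with your closing aside: the saving over Lemma~\ref{lemma15} is a factor of $(4/3)^k$, not $2^{-k}$, since each gate contributes $3\cdot 2^{2n}$ rather than $2^{2(n+1)}=4\cdot 2^{2n}$ input configurations; this does not affect the validity of your argument.
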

On its own, this (small) improvement does not produce any new lower bounds for $M(n)$; in particular, for $n=7$, the number of functions potentially computable with $6$ AND gates becomes $3^6 \times 2^{36+84+7+1} > 2^9 \times 2^{128} = 2^{137}$.

\section{Breaking symmetry on topologies}
\label{sec:topology}

We now focus on improving the bound in Equation~\eqref{eq:topcount} by showing that some topologies compute the same functions.

\begin{definition}
  The set $\mathcal T^0_k$ is the set of all possible topologies with $k$ AND gates.
\end{definition}

Our goal is to remove elements from $T^0_k$ while preserving the set
of all functions computable by a topology in that set.
The first observation is that the actual order of the AND gates is irrelevant for the function computed by the actual circuit, so we can eliminate topologies that only differ on these labels.

\begin{definition}
  Two topologies $T$ and $T'$ are \emph{equivalent}, denoted
  $T\equiv T'$, if there is a permutation $\pi$ of $\{1,\ldots,n\}$
  such that: $\langle L,R\rangle\in T$ iff either
  $\langle\pi(L),\pi(R)\rangle\in T'$ or
  $\langle\pi(R),\pi(L)\rangle\in T'$, where $\pi$ is
  structurally extended to sets and pairs.
\end{definition}
It is easy to check that this relation is an equivalence relation.

\begin{lemma}
  \label{lem:equiv}
  Let $T$ and $T'$ be topologies, with $T\equiv T'$, and $\mathcal C$ be a circuit with topology $T$.
  Then there is a circuit $\mathcal C'$ with topology $T'$ such that $f_{\mathcal C}=f_{\mathcal C'}$.
\end{lemma}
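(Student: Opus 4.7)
The plan is to construct $\mathcal C'$ explicitly by transporting $\mathcal C$ across the witness permutation $\pi$, and then to prove $f_{\mathcal C}=f_{\mathcal C'}$ by induction on the topological ordering of the AND gates. The intuition is that a topology records only the inter-gate wiring, so an equivalence of topologies amounts to a renaming of AND gates, possibly swapping the two inputs of some of them; neither operation can affect the computed Boolean function, since $\wedge$ is commutative and the XOR parts are merely re-attached to the new gate names.

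Concretely, I would write $\mathcal C=\langle\mathcal A,\mathcal O\rangle$ with $\mathcal A=\langle a_1,\ldots,a_k\rangle$ and $a_i=\langle L_i,R_i\rangle$, and let $\pi$ act on AND-gate labels (fixing elements of $X_n^+$), with $s_i\in\{0,1\}$ recording whether $\pi$ swaps the two sides of the $i$-th gate. For each $i$ define $\tilde L_i$ by replacing every $a_j\in L_i$ by $a_{\pi(j)}$ while leaving the $X_n^+$-part unchanged, and similarly $\tilde R_i$. Then build $\mathcal C'=\langle\mathcal A',\mathcal O'\rangle$ by placing $\langle\tilde L_i,\tilde R_i\rangle$ (or $\langle\tilde R_i,\tilde L_i\rangle$ if $s_i=1$) at position $\pi(i)$ of $\mathcal A'$, and obtaining $\mathcal O'$ from $\mathcal O$ by the same renaming $a_j\mapsto a_{\pi(j)}$.

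Next, I would check that $\mathcal C'$ is a well-formed circuit with topology $T'$. The acyclicity requirement ($j<i$ transforms into $\pi(j)<\pi(i)$) follows from the fact that $T'$ is itself a valid topology, whose gates are already listed in a correct topological order; the topology of $\mathcal C'$ equals $T'$ by construction. Finally, I would establish $f_{\mathcal C}=f_{\mathcal C'}$ by induction on $i$, showing that the Boolean function produced by the $\pi(i)$-th gate of $\mathcal C'$ coincides with that of $a_i$ in $\mathcal C$: the circuit-input/$\top$ summands on each side are literally identical, the AND-gate summands agree via the bijection $\pi$ by the induction hypothesis, and the possible swap of $L_i$ and $R_i$ is absorbed by commutativity of $\wedge$. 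The same argument applied to $\mathcal O'$ gives $f_{\mathcal C'}=f_{\mathcal C}$.

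The main obstacle is purely notational: tracking the renaming of gate indices together with the per-gate side swaps, and indexing $\mathcal A'$ in the order required by $T'$. Once this bookkeeping is set up, correctness reduces to a routine induction that exploits only commutativity of $\wedge$.
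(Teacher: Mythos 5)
Your proposal is correct and follows essentially the same route as the paper's proof: rename the AND gates along the witnessing permutation $\pi$ (absorbing any side swaps by commutativity of $\wedge$) and show by induction on the gate order that corresponding gates compute the same function, hence so do the output gates. You merely spell out the bookkeeping (the action of $\pi$ on gate labels, the swap bits $s_i$, and the well-formedness of $\mathcal C'$) that the paper leaves implicit.
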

\begin{proof}
  Construct $\mathcal C'$ by renaming the AND gates in $\mathcal C$ according to $\pi$.
  By commutativity and associativity of $\oplus$, together with commutativity of $\wedge$, a straightforward reasoning by induction establishes that $f_{\mathcal C}^{a_i}=f_{\mathcal C'}^{a_{\pi(i)}}$ for $1\leq i\leq k$, and therefore that $f_{\mathcal C}=f_{\mathcal C}^{\mathcal O}=f_{\mathcal C'}^{\pi(\mathcal O)}=f_{\mathcal C'}$.
\end{proof}

Consecutive AND gates in a topology can be grouped in disjoint \emph{layers},
such that the gates in each layer only depend on the outputs of gates
in previous layers.
The algorithm in Figure~\ref{fig:layering} computes the maximal layering of the gates --
the one such that no layer can be extended forward.

\begin{figure}[hb]
  \centering
  \fbox{\begin{minipage}{.5\textwidth}\small
    \begin{tabbing}
    {\bf(output)} \= \qquad \= \qquad \= \kill
    Algorithm \alg{Layering}\\ \\
    {\bf(input)} \> topology
    $T=\langle\langle L_i,R_i\rangle\mid 1\leq i\leq k\rangle$ \\
    {\bf(init)} \> $\ell:=1$, $S_1:=\emptyset$ \\
    {\bf(loop)} \> for $i=1..k$ \+\+\\
    if $S_\ell\cap(L_i\cup R_i) = \emptyset$ \\
    then $S_\ell := S_\ell\cup\{a_i\}$ \\
    else $\ell := \ell+1$, $S_\ell = \{a_i\}$ \-\-\\
    {\bf(output)} \> layering $S_1,\ldots,S_\ell$
  \end{tabbing}
  \end{minipage}}

  \caption{Algorithm \alg{Layering} to compute a maximal layering of a topology.}
  \label{fig:layering}
\end{figure}

The following definition captures the idea that gates should only be in a layer $\ell$ if one of their inputs depends on a gate in the previous layer $\ell-1$.

\begin{definition}
  A topology $T=\langle\langle L_i,R_i\rangle\mid i=1,\ldots,n\rangle$ is
  \emph{well-layered} if its layering $S_1,\ldots,S_m$ is such that, for
  every $i$ and $k$, if $a_i\in S_k$, then
  $L_i\cap S_{k-1}\neq\emptyset$.
\end{definition}

\begin{example}
  The topology from the circuit in Figure~\ref{fig:circuit} has layers $\{a_1,a_2\}$ and $\{a_3,a_4\}$, and thus it is well-layered, as both $a_3$ and $a_4$ use the output of $a_2$.

  The topology $\{a'_1,a'_2,a'_3,a'_4\}$ for the same circuit, where $a'_1=a_2$, $a'_2=a_3$, $a'_3=a_1$ and $a'_4=a_4$, is not well-layered: its layers are $\{a'_1\}$, $\{a'_2,a'_3\}$ and $\{a'_4\}$, and gate $a'_3$ does not use any gate in the previous layer.
\end{example}

\begin{lemma}[Layering]
  \label{lem:normalization}
  Every topology is equivalent to a well-layered topology.
\end{lemma}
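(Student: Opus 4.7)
The plan is to construct a well-layered topology equivalent to a given one in two syntactic steps: a reordering of the gates via a permutation $\pi$, and per-gate swaps of $L_i$ and $R_i$ — both of which are allowed by the definition of $\equiv$.

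First, I would define the \emph{natural depth layering} recursively, independently of the gate indices: let $N_1$ be the set of gates with no AND-gate inputs (i.e., $L_i\cup R_i=\emptyset$), and for $\ell>1$ let $N_\ell$ be the set of gates $a_i$ with $L_i\cup R_i\subseteq N_1\cup\cdots\cup N_{\ell-1}$ and $(L_i\cup R_i)\cap N_{\ell-1}\neq\emptyset$. Since each gate only uses predecessors, the sets $N_1,N_2,\ldots,N_m$ partition $\mathcal{A}$. Choose a permutation $\pi$ on $\{1,\ldots,k\}$ that lists the gates of $N_1$ first, then those of $N_2$, and so on, and let $T'$ be the topology obtained by relabeling $T$ according to $\pi$.

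Next, I would verify by induction on $\ell$ that running algorithm \alg{Layering} on $T'$ produces exactly the layers $N_1,\ldots,N_m$ (up to relabeling). The base case is immediate since the first gates have empty $L\cup R$ and are collected into $S_1=N_1$. For the inductive step, once all of $N_1,\ldots,N_{\ell-1}$ have been processed, the algorithm's current layer is exactly $N_{\ell-1}$: all of its gates fit into one layer because their inputs lie in $N_1\cup\cdots\cup N_{\ell-2}$, none of which is in the current layer at the moment $N_{\ell-1}$ starts being processed. The first gate of $N_\ell$ then has an input in $N_{\ell-1}$, triggering a new layer, and subsequent gates of $N_\ell$ join that layer since their only inputs in $N_{\ell-1}$ lie in a previous algorithm-layer (namely $N_{\ell-1}$ itself, which is the now-closed previous layer).

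Finally, to enforce the well-layered condition on the left inputs, I would inspect each gate $a_i\in N_\ell$ with $\ell>1$: by construction $(L_i\cup R_i)\cap N_{\ell-1}\neq\emptyset$, so if $L_i\cap N_{\ell-1}=\emptyset$ then $R_i\cap N_{\ell-1}\neq\emptyset$, and I swap $L_i$ with $R_i$. This yields a topology $T''$ in which $L_i\cap S_{\ell-1}\neq\emptyset$ whenever $a_i\in S_\ell$ with $\ell>1$; the swap does not affect $L_i\cup R_i$, so the algorithm's layering is unchanged. Equivalence $T\equiv T''$ holds by the very definition of $\equiv$: $\pi$ provides the required permutation, and the clause allowing $\langle\pi(R),\pi(L)\rangle\in T'$ as an alternative to $\langle\pi(L),\pi(R)\rangle\in T'$ absorbs the left/right swaps.

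The main obstacle I anticipate is the bookkeeping in the inductive argument showing that \alg{Layering} on the reordered topology produces precisely the natural layers $N_\ell$; one must rule out the possibility that some gate in $N_\ell$ could slip into an earlier algorithm-layer or conversely force an early layer break. This is settled by the observation that the algorithm is deterministic and sequential, so the claim reduces to checking, at the boundary between processing $N_{\ell-1}$ and $N_\ell$, that the current layer coincides with $N_{\ell-1}$, which is exactly what the ordering from $\pi$ guarantees.
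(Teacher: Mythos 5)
Your proof is correct, but it takes a genuinely different route from the paper's. The paper argues by iterated local repair: it locates the \emph{first} gate $a_i$ violating the well-layering condition and fixes it either by swapping $L_i$ and $R_i$ (if $R_i$ already meets the previous layer) or by permuting $a_i$ down to just after its last input; termination is justified by observing that the index of the first violation strictly increases. You instead give a one-shot global construction: define the depth partition $N_1,\ldots,N_m$ (each gate's layer determined by $1+\max$ of its inputs' layers), sort the gates by depth via a single permutation, prove that \alg{Layering} applied to the sorted topology reproduces exactly $N_1,\ldots,N_m$, and then perform all the $L/R$ swaps at once. Both arguments rest on the same two moves licensed by $\equiv$ (a relabeling permutation and per-gate input swaps), and your key observation --- that for a gate of depth $\ell$ at least one of $L_i$, $R_i$ must meet $N_{\ell-1}$, so a swap always suffices --- is the same fact the paper exploits. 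What your version buys is a cleaner structure: the delicate bookkeeping in the paper (checking that gates $j+1,\ldots,i$ stay in their layers after the insertion permutation, and that the lexicographic progress measure decreases) is replaced by a single, easily verified invariant about what \alg{Layering} does on a depth-sorted input. The cost is that you must explicitly verify that the relabeled list is still a valid topology (inputs precede their gates) and that the sets $N_\ell$ partition the gates with no gaps; you address the former implicitly and the latter by the recursive definition, and both checks go through. Your proof also yields, as a by-product, the observation that the maximal layering of a well-layered topology coincides with the intrinsic depth layering, which the paper does not state.
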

\begin{proof}
  Let $T=\langle\langle L_i,R_i\rangle\mid i=1,\ldots,k\rangle$ and
  $S_1,\ldots,S_m$ be its layering.
  Assume $T$ is not well-layered, and let $i$ be the smallest index such
  that $a_i\in S_\ell$ and \mbox{$L_i\cap S_{\ell-1}=\emptyset$}.
  
  If $R_i\cap S_{\ell-1}$, then build $T'$ by replacing
  $\langle L_i,R_i\rangle$ with $\langle R_i,L_i\rangle$ in $T$.
  Otherwise, let $j=\max\{z\mid a_z\in L_i\cup R_i\}$, with $\max(\emptyset)=0$;
  let $\pi$ be the permutation inserting $i$ between $j$ and $j+1$
  (so $\pi(i)=j+1$, $\pi(z)=z+1$ for $j<z<i$, and $\pi(z)=z$ for all
  other $z$), and take $T'=\pi(T)$, interchanging $L_i$ and $R_i$ in
  $a_i$ if $a_j\in R_i$.
  Observe that $T'$ is still a valid topology.

  In either case, all indices up to $i$ satisfy the layering
  condition.
  In the first case this is trivial; in the second case, note that
  $j$ cannot occur in $L_{j+1},\ldots,L_i$ or $R_{j+1},\ldots,R_i$ in
  $T'$, so $j+1,\ldots,i$ remain in the same layers as the corresponding
  $j,\ldots,i-1$ in the layering of $T$.

  Iterating this
  construction yields a well-layered topology equivalent to $T$.
\end{proof}

\begin{corollary}
  Let $\mathcal T^1_k$ be the set of well-layered topologies in $\mathcal T^0_k$.
  If $f\in B_n$ is computable by a topology in $\mathcal T^0_k$, then it is computable by a topology in $\mathcal T^1_k$.
\end{corollary}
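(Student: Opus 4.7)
The plan is to combine the two main results established just before the corollary: Lemma~\ref{lem:normalization} (every topology is equivalent to a well-layered one) and Lemma~\ref{lem:equiv} (equivalent topologies compute the same set of functions). The corollary should follow almost immediately by chaining these facts, with no new combinatorial content required.

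Concretely, I would start from the hypothesis that $f \in B_n$ is computable by some topology $T \in \mathcal T^0_k$. By the definition of ``computable by a topology,'' this means there is a circuit $\mathcal C = \langle \mathcal A, \mathcal O \rangle$ whose topology is $T$ and such that $f_{\mathcal C} = f$. Next, I would invoke Lemma~\ref{lem:normalization} to obtain a well-layered topology $T'$ with $T' \equiv T$; by definition, $T' \in \mathcal T^1_k$. Then I would apply Lemma~\ref{lem:equiv} to $T$, $T'$, and the circuit $\mathcal C$ to produce a circuit $\mathcal C'$ whose topology is $T'$ and which satisfies $f_{\mathcal C'} = f_{\mathcal C} = f$. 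This exhibits $f$ as computable by a topology in $\mathcal T^1_k$, completing the proof.

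There is essentially no obstacle here: the two preceding lemmas have been specifically crafted to make this step routine. The only thing to be careful about is bookkeeping --- namely, checking that the well-layered topology $T'$ produced by the layering construction is indeed an element of $\mathcal T^0_k$ (it has the same number of gates as $T$, so this is immediate) and that the equivalence relation used in Lemma~\ref{lem:equiv} is the same relation used in the statement of Lemma~\ref{lem:normalization} (both use $\equiv$). Since the proof is a direct two-step composition, I would keep it to two or three short sentences rather than restating the definitions involved.
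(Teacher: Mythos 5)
Your proposal is correct and matches the paper's proof exactly: the paper's entire argument is ``Consequence of Lemmas~\ref{lem:equiv} and~\ref{lem:normalization},'' i.e., precisely the two-step composition you describe. Your additional bookkeeping remarks (same gate count, same equivalence relation) are sound and consistent with the intended argument.
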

\begin{proof}
  Consequence of Lemmas~\ref{lem:equiv} and~\ref{lem:normalization}.
\end{proof}

We now begin to eliminate redundant topologies from $\mathcal T^1_n$.
Our results make use of the following identity, valid for all Boolean values $P$ and $Q$.
\begin{equation}
  \label{lem:and-plus}
  P\wedge Q \equiv P\wedge(P\oplus Q\oplus\top)
\end{equation}

\begin{definition}
  A topology $T$ is \emph{minimal} if the following hold
  for all $\langle L,R\rangle\in T$.
  \begin{enumerate}[(i)]
  \item (A) If $L\neq\emptyset$, then $L\not\subseteq R$, and
    (B) If $R\neq\emptyset$, then $R\not\subseteq L$.
  \item If $L\cap R\neq\emptyset$, then $(L\cap R)<L\setminus R$ and
    $(L\cap R)<R\setminus L$, where $<$ is any (fixed) total ordering of
    $\wp(\{a_1,\ldots,a_k\})$.
  \end{enumerate}
\end{definition}

\begin{lemma}
  \label{lem:minimal}
  If $f\in B_n$ is computable by topology $T$,
  then it is computable by a well-layered and minimal topology $T'$ with the same number of AND gates as $T$.
\end{lemma}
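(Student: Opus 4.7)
The plan is to first invoke Lemma~\ref{lem:normalization} to replace $T$ with an equivalent well-layered topology $T_0$ (which by Lemma~\ref{lem:equiv} computes the same functions and uses the same number of AND gates), and then to transform $T_0$ gate-by-gate into a minimal topology $T'$, while arguing that well-layering can be preserved throughout. A fixed total order $<$ on $\wp(\{a_1,\ldots,a_k\})$ must be chosen, and I would take $\emptyset$ to be its minimum.

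For each gate $\langle L,R\rangle$ of $T_0$, write $A=L\cap R$, $B=L\setminus R$, $C=R\setminus L$ for the three pairwise-disjoint parts of $L\cup R$. The identity~(\ref{lem:and-plus}), its mirror $P\wedge Q\equiv(P\oplus Q\oplus\top)\wedge Q$, and the commutativity of $\wedge$ yield six topological representations of the gate; the induced intersections range over $\{A,B,C\}$ and the two exclusives are always the remaining two parts. At the concrete-circuit level, switching from one representation to another amounts to XORing an appropriate combination of the existing linear inputs (plus possibly $\top$) into one of the XOR gates feeding the AND gate, so the output of the AND gate, and hence the overall computed function, is unchanged and no new AND gate is introduced. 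I would choose, for each gate, the representation whose intersection is smallest under $<$.

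Verifying minimality is then direct. Since the three parts are pairwise disjoint they are distinct whenever nonempty, so selecting the smallest as the intersection makes condition (ii) strict; and if the chosen intersection is nonempty it lies strictly above $\emptyset$, forcing both exclusives to be strictly greater and thus nonempty, which yields (i). The main obstacle, and the part I expect to need the most care, is preserving well-layering. The algorithm \alg{Layering} depends only on $L\cup R$ at each gate, and this union is an invariant of the rewriting; therefore the layer partition $S_1,\ldots,S_m$ computed on $T_0$ is still valid on the rewritten topology. What can fail is the requirement $L\cap S_{\ell-1}\neq\emptyset$, since the new left side may differ from the old $L$. This is repaired by a final swap of $L$ and $R$ at the offending gate: because $L'\cup R'=L\cup R$ still meets $S_{\ell-1}$ (inherited from $T_0$'s well-layering), at least one of $L',R'$ does, and the swap preserves both (i) and (ii) since each of these conditions is symmetric in $L$ and $R$. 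Making this gate-by-gate swap-and-verify argument airtight, so that the two properties coexist after the rewrite, is the delicate step.
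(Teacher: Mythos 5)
Your proof is correct and rests on the same core mechanism as the paper's: identity~\eqref{lem:and-plus} and its mirror give the six representations of a gate, whose intersections range over the three disjoint parts $A,B,C$; one picks the representation with the $<$-smallest intersection; and a final swap of $L$ and $R$ repairs well-layering, which is otherwise preserved because \alg{Layering} inspects only the invariant union $L\cup R$. The difference is organizational: the paper iterates a rewriting step with a lexicographic termination measure over the counts of violations of (i-A), (i-B) and (ii), whereas you compute the canonical gate in a single pass; your version is arguably cleaner, and your invariance argument for the layer partition is a sharper justification than the paper's. One point you should promote from an aside to an explicit hypothesis: your one-pass rule delivers condition~(i) only because you stipulate that $\emptyset$ is the minimum of $<$. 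If some nonempty set preceded $\emptyset$ in the order, then in the case where exactly one part is empty (say $B=\emptyset$) the rule could select $A$ as the intersection, producing $L=A\subseteq A\cup C=R$ with $L\neq\emptyset$ and hence a violation of (i-A); the paper avoids this by giving condition~(i) lexicographic priority over condition~(ii). Since the definition of minimality allows any fixed total order, your choice is legitimate (and the resulting class of minimal topologies does not depend on where $\emptyset$ sits, as condition~(ii) never compares against $\emptyset$ once (i) holds), but the argument genuinely uses it.
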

\begin{proof}
  Let $\mathcal C$ be a circuit computing $f$ with topology $T$.
  Without loss of generality we can assume $T$ is well-layered.
  Assume also that $T$ is not minimal.
  We show that we can transform $\mathcal C$ so that the three conditions are
  met; at each stage, the triple $\langle v_1,v_2,v_3\rangle$
  indicating the number of gates violating conditions~(i-A), (i-B) and
  (ii), respectively, decreases w.r.t.~lexicographic ordering.
  Since $\mathcal C$ is finite, iteration produces a circuit with minimal
  topology.
  \begin{enumerate}[(i)]
  \item[(i-A)] Assume that gate $a=\langle L,R\rangle$ is such that $L\subseteq R$, so that
    $R=L\cup R'$.
    Then the function computed by this gate can be written as
    $((\bigoplus L)\oplus A)\wedge((\bigoplus L)\oplus(\bigoplus R')\oplus B)$,
    and by~\eqref{lem:and-plus} this is equivalent to
    $((\bigoplus L)\oplus A)\wedge((\bigoplus R')\oplus A\oplus B\oplus\top)$.
    Replacing $a$ by $\langle L,R'\rangle$ yields a circuit that has
    one less violation of condition~(i-A).
  \item[(i-B)] Assume that gate $a=\langle L,R\rangle$ is such that $R\subseteq L$, so that
    $L=L'\cup R$.
    The construction is analogous, using the equivalence between
    $((\bigoplus L')\oplus(\bigoplus R)\oplus A)\wedge((\bigoplus R)\oplus B)$
    and $((\bigoplus L')\oplus A\oplus B\oplus\top)\wedge((\bigoplus R)\oplus B)$.

    In order to ensure that the resulting topology is well-layered, it might
    be necessary to interchange $L'$ and $R$ in the gate replacing $a$,
    as possibly only $R$ intersects the previous layer.

  \item[(ii)] Assume that gate $a=\langle L,R\rangle$ is such that $L\cap R\neq\emptyset$, so
    that $L=X\cup L'$ and $R=X\cup R'$, with all of $L'$, $R'$ and
    $X$ not empty (otherwise condition (i) would not be met).
    Again by~\eqref{lem:and-plus} we can write the function
    computed by this gate as one of
    \begin{align*}
      \textstyle((\bigoplus X)\oplus(\bigoplus L')\oplus A)
      &\textstyle{}\bigwedge{}((\bigoplus X)\oplus(\bigoplus R')\oplus B) \\
      \textstyle((\bigoplus X)\oplus(\bigoplus L')\oplus A)
      &\textstyle{}\bigwedge{}((\bigoplus L')\oplus(\bigoplus R')\oplus A\oplus B\oplus T) \\
      \textstyle((\bigoplus L')\oplus(\bigoplus R')\oplus A\oplus B\oplus T)
      &\textstyle{}\bigwedge{}((\bigoplus X)\oplus(\bigoplus R')\oplus B)
    \end{align*}
    and we can replace $a$ by a gate whose inputs intersect on either $X$, $L'$ or $R'$,
    which means we can always ensure it to be the lexicographically
    smallest of the three.

    Since either $X$ or $L'$ intersects the previous layer, it is also
    possible to guarantee layering, if necessary by permuting the
    inputs.
    Likewise, the resulting gate always satisfies condition~(i).\qedhere
  \end{enumerate}
\end{proof}

\begin{definition}
  The set $\mathcal T_k$ is the set of all well-layered and minimal
  topologies using $k$ AND gates.
\end{definition}

Merging Lemmas~\ref{lem:normalization} and~\ref{lem:minimal}, we obtain the following result.

\begin{theorem}
  \label{teo:Tk-complete}
  Every $n$-ary Boolean function computable by a circuit with $k$ AND gates is computable by a topology in $\mathcal T_k$.
\end{theorem}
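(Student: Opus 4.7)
The plan is essentially bookkeeping: the real work has already been done in Lemma~\ref{lem:minimal}, which already produces a topology that is simultaneously well-layered and minimal. So my proof would simply unpack the hypothesis into a statement about topologies and then invoke the previous lemma.

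First, I would start from a circuit $\mathcal C$ with $k$ AND gates computing $f$. Let $T$ be the topology of $\mathcal C$, which by construction has exactly $k$ AND gates. Then, by the definition of ``computable by $T$,'' the function $f$ is computable by the topology $T$. This step is purely definitional and needs no argument beyond pointing at Definition of ``computable by $T$.''

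Next, I would apply Lemma~\ref{lem:minimal} to $T$: this yields a well-layered and minimal topology $T'$, still with $k$ AND gates, that also computes $f$. By the definition of $\mathcal T_k$, the topology $T'$ belongs to $\mathcal T_k$, which is exactly the conclusion we want.

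There is no genuine obstacle here, since Lemma~\ref{lem:minimal} already handles both normalizations (well-layering and minimality) simultaneously; the explicit mention of Lemma~\ref{lem:normalization} in the paragraph preceding the theorem is really just for emphasis, as well-layering is baked into the proof of Lemma~\ref{lem:minimal}. The only thing one might want to check is the mild subtlety that ``computable by a circuit with $k$ AND gates'' translates verbatim to ``computable by a topology with $k$ AND gates,'' but this is immediate from the definitions of a circuit's topology and of a function being computed by a topology.
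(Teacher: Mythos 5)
Your proof is correct and matches the paper, which simply states the theorem as the result of ``merging'' Lemmas~\ref{lem:normalization} and~\ref{lem:minimal}; your unpacking via the definition of ``computable by a topology'' followed by an appeal to Lemma~\ref{lem:minimal} is exactly that argument. One small caveat: the reference to Lemma~\ref{lem:normalization} is not merely ``for emphasis''---the ``without loss of generality we can assume $T$ is well-layered'' step inside the proof of Lemma~\ref{lem:minimal} is precisely where Lemmas~\ref{lem:equiv} and~\ref{lem:normalization} are used, so it is a genuine dependency rather than a redundancy.
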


The iterative algorithm in Figure~\ref{fig:g-and-p} computes a
set of minimal, well-layered topologies unique up to equivalence -- in other
words, representatives of the elements of $\mathcal T_k/{\equiv}$.
It generates these topologies layer by layer, pruning those equivalent to some other, in the spirit of~\cite{ourICTAIpaper}.
In the last line of the \textbf{(loop)} in \alg{Extend}, the notation $T\cdot a$ denotes the list obtained by appending gate $a$ to $T$.

\begin{figure}[ht]
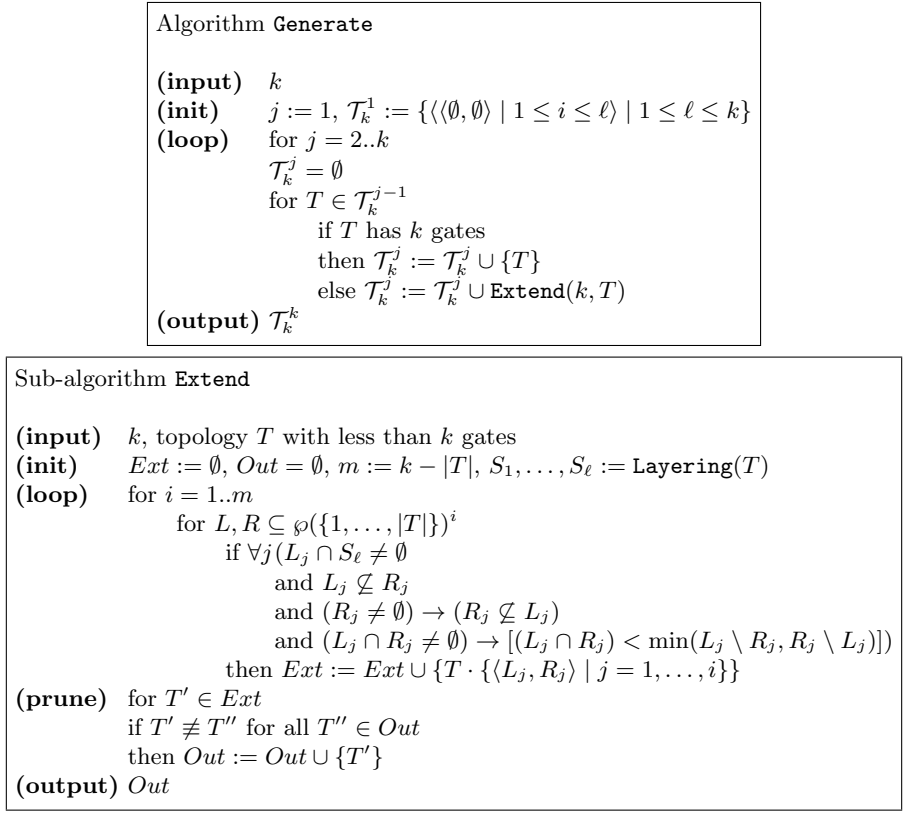

  \centering

  \fbox{\begin{minipage}{.7\textwidth}\small
  \begin{tabbing}
    {\bf(output)} \= \qquad \= \qquad \= \qquad \= \kill
    Algorithm \alg{Generate}\\ \\
    {\bf(input)} \> $k$ \\
    {\bf(init)} \> $j:=1$, $\mathcal T^1_k:=\{\langle\langle \emptyset,\emptyset\rangle\mid1\leq i\leq \ell\rangle\mid 1\leq \ell\leq k\}$ \\
    {\bf(loop)} \> for $j=2..k$ \+ \\
    $\mathcal T^j_k = \emptyset$ \\
    for $T\in\mathcal T^{j-1}_k$ \+ \\
    if $T$ has $k$ gates \\
    then $\mathcal T^j_k := \mathcal T^j_k\cup\{T\}$ \\
    else $\mathcal T^j_k := \mathcal T^j_k\cup\mbox{\alg{Extend}}(k,T)$ \-\- \\
    {\bf(output)} \> $\mathcal T^k_k$
  \end{tabbing}
  \end{minipage}}\smallskip

  \fbox{\begin{minipage}{.7\textwidth}\small
  \begin{tabbing}
    {\bf(output)} \= \qquad \= \qquad \= \qquad \= \kill
    Sub-algorithm \alg{Extend} \\ \\
    {\bf(input)} \> $k$, topology $T$ with less than $k$ gates \\
    {\bf(init)} \> $Ext:=\emptyset$, $Out=\emptyset$,
    $m:=k-|T|$, $S_1,\ldots,S_\ell:=\mbox{\alg{Layering}}(T)$ \\
    {\bf(loop)} \> for $i=1..m$ \+\+ \\
    for $L,R\subseteq\wp(\{1,\ldots,|T|\})^i$ \+\\
    if $\forall j$ \> $(L_j\cap S_\ell\neq\emptyset$ \+ \\
    and $L_j\not\subseteq R_j$ \\
    and $(R_j\neq\emptyset)\to(R_j\not\subseteq L_j)$ \\
    and $(L_j\cap R_j\neq\emptyset)\to[(L_j\cap R_j)<\min(L_j \setminus R_j,R_j \setminus L_j)])$ \-\\
    then $Ext := Ext\cup\{T\cdot\{\langle L_j,R_j\rangle\mid j=1,\ldots,i\}\}$ \-\-\-\\
    {\bf(prune)} \> for $T'\in Ext$ \+ \\
    if $T'\not\equiv T''$ for all $T''\in Out$ \\
    then $Out := Out\cup\{T'\}$ \- \\
    {\bf(output)} \> $Out$
  \end{tabbing}
  \end{minipage}}

  \caption{Iterative algorithm \alg{Generate} to compute $\mathcal T_k/{\equiv}$.}
  \label{fig:g-and-p}
\end{figure}

\begin{theorem}
  If $T\in\mathcal T_k$, then $T\equiv T'$ for some
  $T'\in\mbox{\alg{Generate}}(n)$.
\end{theorem}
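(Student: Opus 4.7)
The plan is to establish the claim by induction on $j$, the iteration counter of \alg{Generate}, proving the stronger invariant that after the $j$-th pass of the main loop (treating the initialization as $j=1$), every well-layered minimal topology with at most $k$ gates whose maximal layering has at most $j$ layers admits a $\equiv$-representative in $\mathcal T^j_k$. Since every $T \in \mathcal T_k$ has $k$ gates in at most $k$ layers, instantiating this invariant at $j=k$ yields the theorem.

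The base case $j=1$ is direct: a well-layered minimal topology with a single layer must have every gate of the form $\langle\emptyset,\emptyset\rangle$, as there is no previous layer to depend on, so up to $\equiv$ such a topology is determined by its number of gates $\ell \in \{1,\ldots,k\}$, and the initialization of $\mathcal T^1_k$ contains exactly one representative for each such $\ell$. For the inductive step, the easy sub-case is when $T$ has at most $j-1$ layers: the inductive hypothesis supplies $T' \equiv T$ in $\mathcal T^{j-1}_k$, and if $T'$ has $k$ gates the branch \emph{if $T$ has $k$ gates} copies $T'$ unchanged into $\mathcal T^j_k$.

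For the substantive sub-case, suppose $T$ has exactly $j$ layers $S_1,\ldots,S_j$. The restriction $T_0$ of $T$ to $S_1 \cup \cdots \cup S_{j-1}$ is itself well-layered and minimal (both properties survive when the last layer is dropped), so by induction there are $T_0' \in \mathcal T^{j-1}_k$ and a permutation $\pi$ with $T_0 \equiv T_0'$ via $\pi$. I extend $\pi$ by mapping the gates of $S_j$ bijectively onto fresh indices $\{|T_0|+1,\ldots,k\}$ and set $T^\star := \pi(T)$. By construction $T^\star$'s first $j-1$ layers coincide with $T_0'$, and its final layer consists of gates referring only to indices in $\{1,\ldots,|T_0|\}$. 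One then checks that this final layer is of exactly the form enumerated by \alg{Extend}$(k, T_0')$, possibly after swapping $L$ and $R$ in some gates (a swap that preserves the gate's semantics by commutativity of $\wedge$ and hence preserves $\equiv$). The \textbf{prune} step guarantees that a single representative of each $\equiv$-class of extensions survives in the output.

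The main obstacle is reconciling the two sources of asymmetry between $L$ and $R$ in a gate: well-layering of \alg{Extend}'s output requires $L \cap S_\ell \neq \emptyset$, whereas minimality condition~(ii) requires the intersection $L \cap R$ to be smaller than the lexicographically smaller of $L \setminus R$ and $R \setminus L$. For every gate of $T^\star$'s final layer one must argue that at most one swap is ever needed and that after it both conditions still hold --- this works precisely because condition~(ii) is symmetric in $L$ and $R$ and well-layering can always be achieved by orienting the gate so that the inputs intersecting $S_\ell$ go into $L$. A secondary subtlety is that gates within $S_j$ may be listed in any order: \alg{Extend} enumerates all such orderings (the product $(L,R) \in \wp(\cdot)^i$), and the \textbf{prune} step collapses them to a transversal, ensuring that permutations within the last layer do not cause any $\equiv$-class to be missed.
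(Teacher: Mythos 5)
Your overall strategy --- induction on the layer counter $j$, peeling off the last layer of $T$ and invoking the hypothesis on the restriction, then matching the last layer against what \alg{Extend} enumerates --- is exactly the argument the paper compresses into a few sentences, and most of the details you supply (the base case, the $L$/$R$ swap reconciling well-layering with condition~(ii), the role of \textbf{prune}) are sound. However, the induction invariant you state is false, and your inductive step does not actually maintain it. You claim that after pass $j$ \emph{every} well-layered minimal topology with at most $k$ gates and at most $j$ layers has a representative in $\mathcal T^j_k$. But \alg{Generate} \emph{discards} any $T'\in\mathcal T^{j-1}_k$ with fewer than $k$ gates, replacing it by its proper extensions: the \textbf{else} branch puts $\mbox{\alg{Extend}}(k,T')$ into $\mathcal T^j_k$ but not $T'$ itself. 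For instance, with $k=3$ the one-gate topology $\langle\langle\emptyset,\emptyset\rangle\rangle$ has at most $2$ layers and at most $3$ gates, yet it is absent from $\mathcal T^2_3$. Your ``easy sub-case'' only treats the situation where the representative $T'$ has $k$ gates and is silent on this one, which is precisely where the stated invariant breaks.

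The repair is local. The invariant \alg{Generate} actually maintains, and the only one your substantive sub-case needs, is the conjunction of: (a) every well-layered minimal topology with exactly $k$ gates and at most $j$ layers has a representative in $\mathcal T^j_k$; and (b) every well-layered minimal topology with at most $k$ gates and \emph{exactly} $j$ layers has a representative in $\mathcal T^j_k$. Part (a) at $j=k$ gives the theorem; part (b) at level $j-1$ is exactly what you invoke for the restriction $T_0$, which has fewer than $k$ gates but exactly $j-1$ layers, so your peeling argument goes through unchanged once the hypothesis is restated. (You should also record, as you implicitly assume, that for well-layered topologies the layer of a gate equals its dependency depth, so that equivalence preserves the layering and $\pi$ carries $S_{j-1}$ of $T_0$ onto the last layer of $T_0'$.) With these corrections your proof is a faithful, considerably more explicit version of the paper's own sketch rather than a different route.
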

\begin{proof}
  A topology with $k$ gates has at most $k$ layers, and \alg{Generate}
  loops through all possible lengths of these layers.

  In \alg{Extend}, we loop over all possible combinations of outputs
  from previous gates.
  The condition in the innermost loop excludes gates that lead to
  non-well-layered or non-minimal topologies.
  The pruning step guarantees that the first representative of each
  equivalence class of topologies is kept.

  Therefore every minimal and well-layered topology is equivalent to an
  element of \alg{Generate}$(k)$.
\end{proof}

Table~\ref{tab:topologies} shows the sizes of the sets
$\mathcal T_k/\equiv$, computed using two independent implementations of Algorithm \alg{Generate}.

\begin{table}[h]
  \centering
  \begin{tabular}{r|r|r|r|r|r|r}
    $k$ & $1$ & $2$ & $3$ & \multicolumn1{c|}{$4$} & \multicolumn1{c|}{$5$} & \multicolumn1c{$6$} \\ \hline
    $\left|\mathcal T_k/\equiv\right|$ & $1$ & $2$ & $8$ & $88$ & $3{,}564$ & $555{,}709$ \\
  \end{tabular}
  \caption{Number of non-equivalent minimal well-layered topologies using $k$ AND gates.}
  \label{tab:topologies}
\end{table}

Replacing the estimated number of topologies on $k$ AND gates given in Equation~\eqref{eq:topcount} reduces the straightforward upper bound on the number of computable functions on $7$ inputs with $6$ AND gates from $2^{140}$ to $555{,}709 \times 2^{110} > 2^{19} \times 2^{110} = 2^{129}$, which is still (just) larger than the number of $7$-ary Boolean functions.
However, combining this result with Theorem~\ref{teo:functions} does produce a new result, presented in the next section.

\section{The result}
\label{sec:result}

Combining Theorems~\ref{teo:functions} and~\ref{teo:Tk-complete} we immediately obtain the following result.

\begin{theorem}
  \label{teo:upbound}
  At most $3^k\times 2^{2kn+n+k+1}\times\left|\mathcal T_k/\equiv\right|$ functions from $B_n$ can be computed by circuits with $k$ AND gates.
\end{theorem}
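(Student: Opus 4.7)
The plan is essentially bookkeeping: combine the two previously proved counting results and take the product. The statement is labeled as following immediately from Theorems~\ref{teo:functions} and~\ref{teo:Tk-complete}, so the substance of the proof is already in those results; what remains is to arrange the factors correctly.

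First, I would invoke Theorem~\ref{teo:Tk-complete} to replace the class of all circuits with $k$ AND gates by circuits whose topology lies in $\mathcal T_k$: every $n$-ary function realized by some circuit on $k$ AND gates is realized by a circuit whose topology is well-layered and minimal. Combined with Lemma~\ref{lem:equiv}, which shows that equivalent topologies compute the same set of functions, I may pick one topology per class in $\mathcal T_k/{\equiv}$; the union of the functions computed over these representatives still covers everything computable with $k$ AND gates. This yields $|\mathcal T_k/{\equiv}|$ as the first factor.

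Next, by Lemma~\ref{lem:not-not}, any function computable with $k$ AND gates is also computable by a negation-normal circuit with $k$ AND gates, so when counting the functions arising from a fixed topology I may restrict to negation-normal circuits. Theorem~\ref{teo:functions} then bounds the number of such circuits sharing that topology by $(3\times 2^{2n})^k\times 2^{n+k+1}$. Expanding, $(3\times 2^{2n})^k = 3^k\times 2^{2kn}$, so the per-topology bound becomes $3^k\times 2^{2kn+n+k+1}$.

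Multiplying the two factors gives at most
\[
|\mathcal T_k/{\equiv}|\times 3^k\times 2^{2kn+n+k+1}
\]
distinct functions from $B_n$ computable by circuits with $k$ AND gates, which is the stated bound. There is no real obstacle: the only care required is to make sure that the passage through equivalence classes is justified (handled by Lemma~\ref{lem:equiv}) and that negation-normalization, which may change the circuit but not its topology or the function computed, is invoked before applying the per-topology count from Theorem~\ref{teo:functions}.
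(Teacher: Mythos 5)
Your proposal is correct and matches the paper's approach: the paper proves this theorem simply by stating that it follows immediately from combining Theorems~\ref{teo:functions} and~\ref{teo:Tk-complete}, and your write-up just makes that combination explicit (including the correct expansion of $(3\times 2^{2n})^k\times 2^{n+k+1}$ and the appeal to Lemma~\ref{lem:equiv} for passing to equivalence classes).
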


\begin{theorem}
  There is a Boolean function on $7$ inputs with a multiplicative complexity of $7$ or higher.
\end{theorem}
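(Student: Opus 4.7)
The plan is a straightforward pigeonhole argument built on top of Theorem~\ref{teo:upbound} and on the value $|\mathcal{T}_6/\equiv| = 555{,}709$ reported in Table~\ref{tab:topologies}. Specifically, I would instantiate Theorem~\ref{teo:upbound} at $n = 7$ and $k = 6$, and then compare the resulting upper bound on the number of Boolean functions computable with $6$ AND gates against $|B_7| = 2^{2^7} = 2^{128}$. A strict inequality in the right direction immediately forces the existence of a function in $B_7$ whose multiplicative complexity is at least $7$.

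The explicit calculation is almost mechanical. The exponent $2kn + n + k + 1$ evaluates to $84 + 7 + 6 + 1 = 98$, so Theorem~\ref{teo:upbound} gives a bound of $3^6 \cdot 555{,}709 \cdot 2^{98}$. Using the coarse estimates $3^6 = 729 < 2^{10}$ and $555{,}709 < 2^{20} = 1{,}048{,}576$ yields $3^6 \cdot 555{,}709 < 2^{30}$, hence a total upper bound strictly below $2^{30} \cdot 2^{98} = 2^{128} = |B_7|$. Pigeonhole finishes the argument.

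The main obstacle is not this final counting at all, but the fact that one needs \emph{both} symmetry reductions developed in Sections~\ref{sec:symmetry} and~\ref{sec:topology} to drive the estimate below the critical threshold of $2^{128}$. The unrefined bound of Lemma~\ref{lemma15} is $2^{140}$, a factor of $2^{12}$ too large; the negation-normal restriction of Theorem~\ref{teo:functions} on its own only saves a small multiplicative constant, and the bulk of the improvement comes from replacing the crude topology count $2^{k^2 - k} = 2^{30}$ of Equation~\eqref{eq:topcount} by the drastically smaller value $|\mathcal{T}_6/\equiv| = 555{,}709 < 2^{20}$. Since the margin between the final bound and $2^{128}$ is narrow, a correct and complete enumeration of $\mathcal{T}_6/\equiv$ by Algorithm \alg{Generate}, ideally cross-checked by an independent implementation as the paper reports, is the real hard prerequisite on which everything rests.
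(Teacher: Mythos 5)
Your proposal is correct and is essentially identical to the paper's own proof: both instantiate Theorem~\ref{teo:upbound} at $n=7$, $k=6$, bound $3^6\times 555{,}709$ by $2^{10}\times 2^{20}=2^{30}$, and conclude that fewer than $2^{128}=|B_7|$ functions are computable with $6$ AND gates. Your closing remark that the real burden lies in the verified enumeration of $\mathcal T_6/{\equiv}$ matches the paper's reliance on two independent implementations of \alg{Generate}.
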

\begin{proof}
  By Table~\ref{tab:topologies}, there are $555{,}709$ possible
  topologies for circuits with $6$ AND gates.
  Instantiating $n=7$ and $k=6$ in Theorem~\ref{teo:upbound} and using this value,
  we conclude that the number of $7$-ary Boolean functions computable by circuits
  with $6$ gates is at most
  $555{,}709\times3^6\times2^{98}<2^{20}\times2^{10}\times2^{98}=2^{128}=|B_7|$.
  Therefore, not all functions in $B_7$ can be computable by these circuits.
\end{proof}

\section{Conclusion and Future Work}
\label{sec:conclusion}
In this work we have shown that $M(7)$ is at least $7$, raising the previously known lower bound by $1$. The case of $7$
inputs has consequently become the smallest known case where $M(n) > n-1$.

In the future, we are planning to determine $M(6)$, which we conjecture to be~$5$, by extensive computer experiments refining the approach of \cite{Turan2014}.
Also, we plan to find an actual Boolean function on $7$ inputs with a multiplicative complexity of $7$ or higher as a witness
to our non-constructive proof.

\bibliographystyle{elsarticle-num}
\bibliography{7inputs}

\begin{thebibliography}{1}
\expandafter\ifx\csname url\endcsname\relax
  \def\url#1{\texttt{#1}}\fi
\expandafter\ifx\csname urlprefix\endcsname\relax\def\urlprefix{URL }\fi
\expandafter\ifx\csname href\endcsname\relax
  \def\href#1#2{#2} \def\path#1{#1}\fi

\bibitem{Boyar2008}
J.~Boyar, R.~Peralta, Tight bounds for the multiplicative complexity of
  symmetric functions, Theor. Comput. Sci. 396~(1--3) (2008) 223--246.

\bibitem{Hirt2005}
M.~Hirt, J.~B. Nielsen, Upper bounds on the communication complexity of
  optimally resilient cryptographic multiparty computation, in: B.~K. Roy
  (Ed.), {ASIACRYPT} 2005, Vol. 3788 of LNCS, Springer, 2005, pp. 79--99.

\bibitem{Boyar2000}
J.~Boyar, R.~Peralta, D.~Pochuev, On the multiplicative complexity of boolean
  functions over the basis ($\wedge$, $+$, $1$), Theor. Comput. Sci. 235~(1)
  (2000) 43--57.

\bibitem{Turan2014}
M.~S. Turan, R.~Peralta, The multiplicative complexity of boolean functions on
  four and five variables, in: T.~Eisenbarth, E.~{\"{O}}zt{\"{u}}rk (Eds.),
  LightSec 2014, Vol. 8898 of LNCS, Springer, 2015, pp. 21--33.

\bibitem{ourICTAIpaper}
M.~Codish, L.~Cruz-Filipe, M.~Frank, P.~Schneider-Kamp, Twenty-five comparators
  is optimal when sorting nine inputs (and twenty-nine for ten), in: ICTAI
  2014, IEEE, 2014, pp. 186--193.

\end{thebibliography}

\end{document}